\newcommand{\bea}{\begin{eqnarray}}
\newcommand{\eea}{\end{eqnarray}}
\newtheorem{thm}{Theorem}
\newcommand{\bfl}{\begin{flushleft}}
\newcommand{\efl}{\end{flushleft}}
\newcommand{\bfr}{\begin{flushright}}
\newcommand{\efr}{\end{flushright}}
\newcommand{\bc}{\begin{center}}
\newcommand{\ec}{\end{center}}
\newcommand{\edar}{\end{array}}
\newcommand{\begar}{\begin{array}}
\newcommand{\bit}{\begin{itemize}}
\newcommand{\eit}{\end{itemize}}
\newcommand{\beq}{\begin{equation}}
\newcommand{\eeq}{\end{equation}}
\newcommand{\ben}{\begin{enumerate}}
\newcommand{\een}{\end{enumerate}}
\newcommand{\bean}{\begin{eqnarray*}}
\newcommand{\eean}{\end{eqnarray*}}
\newtheorem{lemma}{Lemma}
\newcommand{\IsTR}{\ifthenelse{1<2}}
\begin{document}

\title{Semi-Definite Programming Relaxation for Non-Line-of-Sight Localization \thanks{This project is supported in part by AFOSR grant FA9550-10-1-0567}}

\author{ \IEEEauthorblockN{Venkatesan. N. Ekambaram, Giulia Fanti and Kannan Ramchandran}\\
\IEEEauthorblockA{Department of EECS, University of California, Berkeley\\
Email: \{venkyne, gfanti, kannanr\}@eecs.berkeley.edu}}

\maketitle

\begin{abstract}
We consider the problem of estimating the locations of a set of points in a $k$-dimensional euclidean space given a subset of the pairwise distance measurements between the points. We focus on the case when some fraction of these measurements can be arbitrarily corrupted by large additive noise. Given that the problem is highly non-convex, we propose a simple semidefinite programming relaxation that can be efficiently solved using standard algorithms. We define a notion of non-contractibility and show that the relaxation gives the exact point locations when the underlying graph is non-contractible. The performance of the algorithm is evaluated on an experimental data set obtained from a network of 44 nodes in an indoor environment and is shown to be robust to non-line-of-sight errors.
\end{abstract}
{\bf \em Keywords:} Non-Line-of-Sight localization, semi-definite programming, robust matrix decomposition.

\section{Introduction}

The problem of localization has applications in many interesting areas such as cyber physical systems \cite{lee2008cyber}, molecular biology \cite{prot}, nonlinear dimensionality reduction \cite{roweis2000nonlinear} etc. In all these applications, pairwise noisy distance measurements are obtained between subsets of points/nodes in some Euclidean space and it is required to estimate the locations of these points with high accuracies. Given the range of applications, the problem has received considerable attention in the last decade and is an active area of research. Even though the problem statement looks deceptively simple,  the general case is shown to be NP-complete \cite{aspnes2004computational}, i.e. finding a valid configuration of points satisfying a subset of pairwise distance measurements is not solvable in polynomial time. Significant research work is devoted to obtaining convex relaxations for the underlying optimization problem and obtaining conditions under which the problem can be solved in polynomial time. Semi-definite programming (SDP) relaxations have been proposed \cite{biswas2004semidefinite}, \cite{zhu2010universal}, \cite{javanmard2011localization} and the relaxation has been shown to obtain the exact solution for certain classes of graphs known as uniquely localizable graphs, examples of which include  d-lateration and random geometric graphs.\\

 \begin{figure}[h!]
 \centering
\subfigure[\scriptsize Sensor Localization.]
{\includegraphics[height =1.3in]{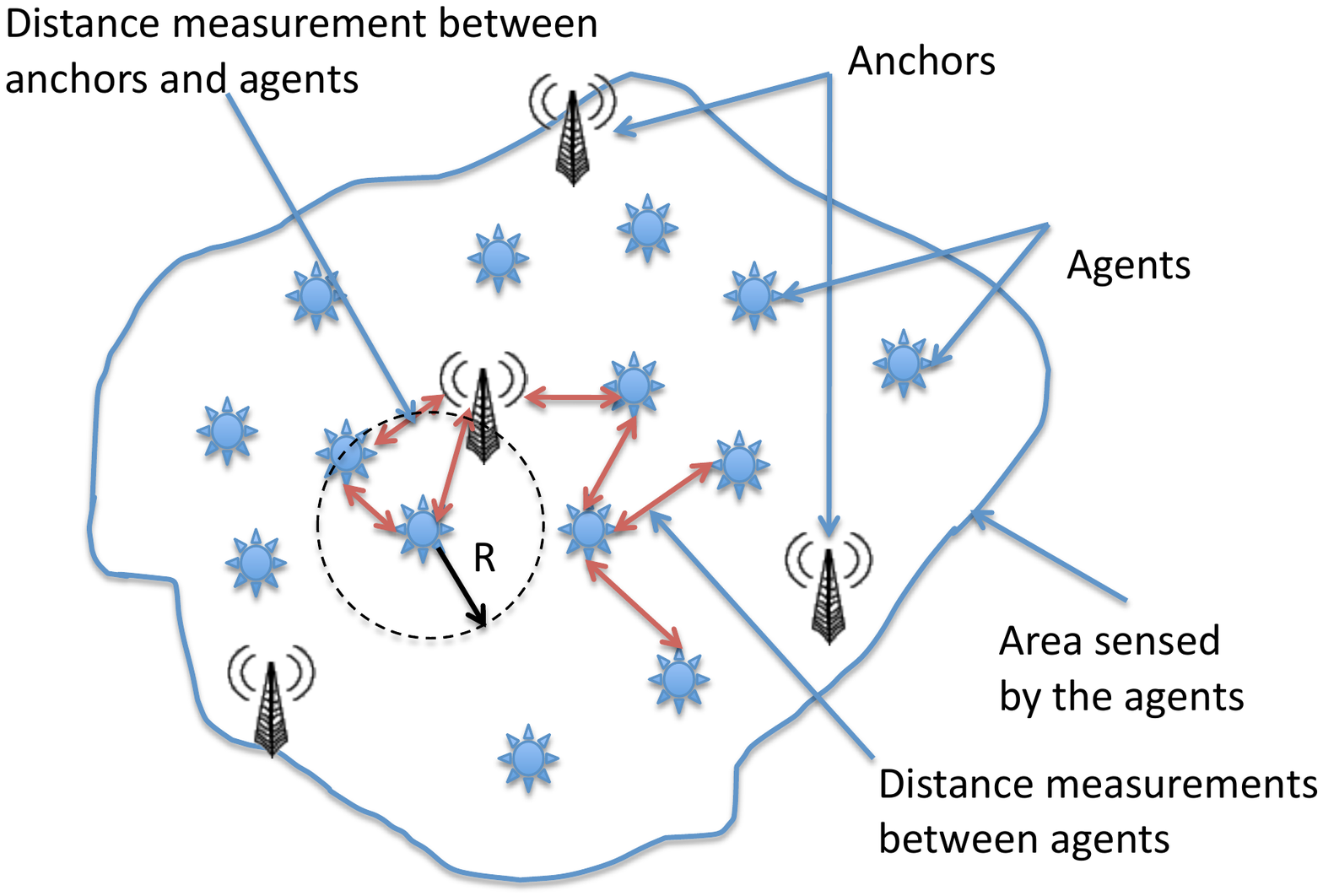} \label{fig:SensorLoc}}
\subfigure[\scriptsize Molecular conformation.]
{\includegraphics[height =1.2in]{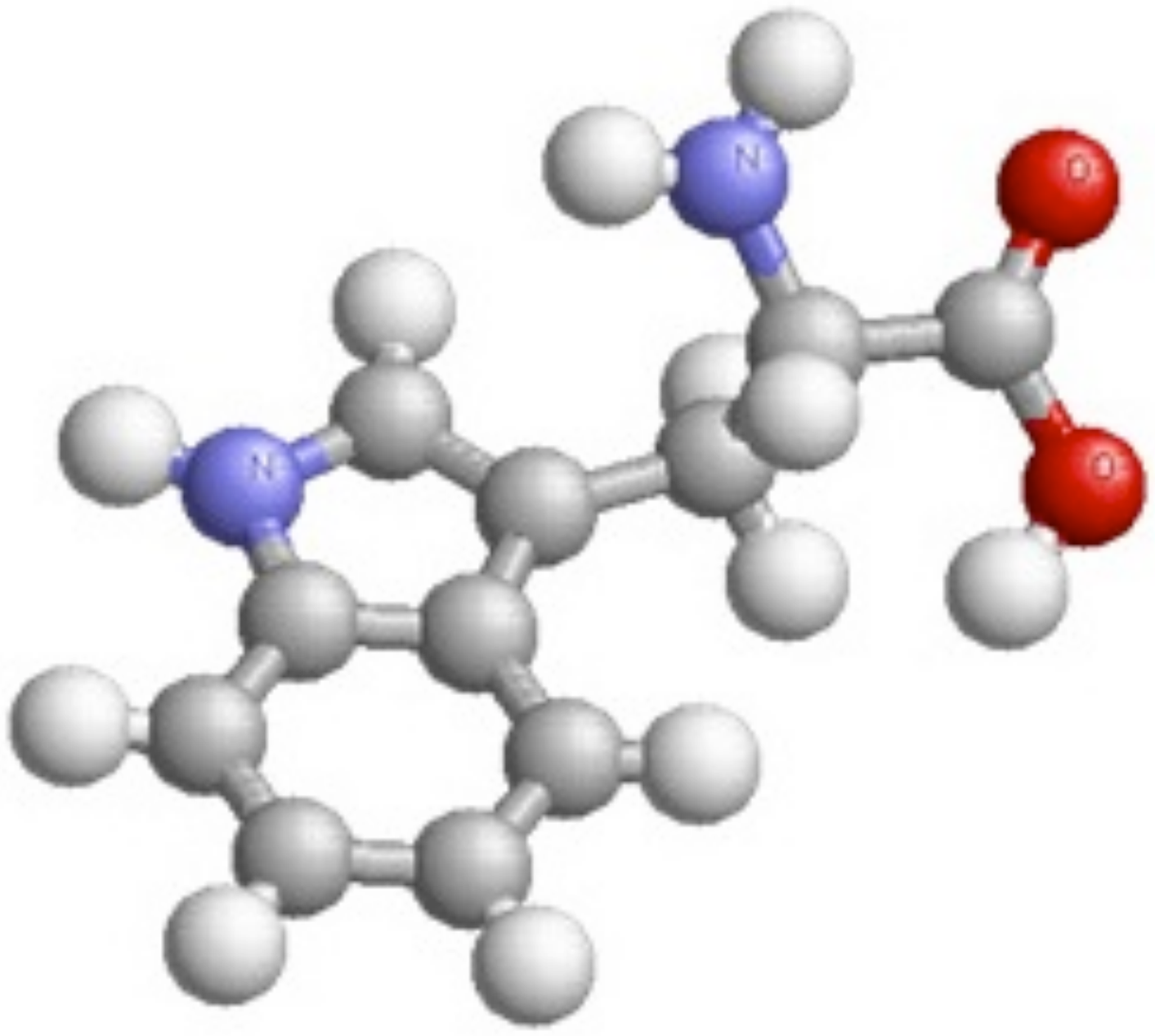} \label{fig:MolConf}}\\
\subfigure[\scriptsize Manifold Learning.]
{\includegraphics[height =2in]{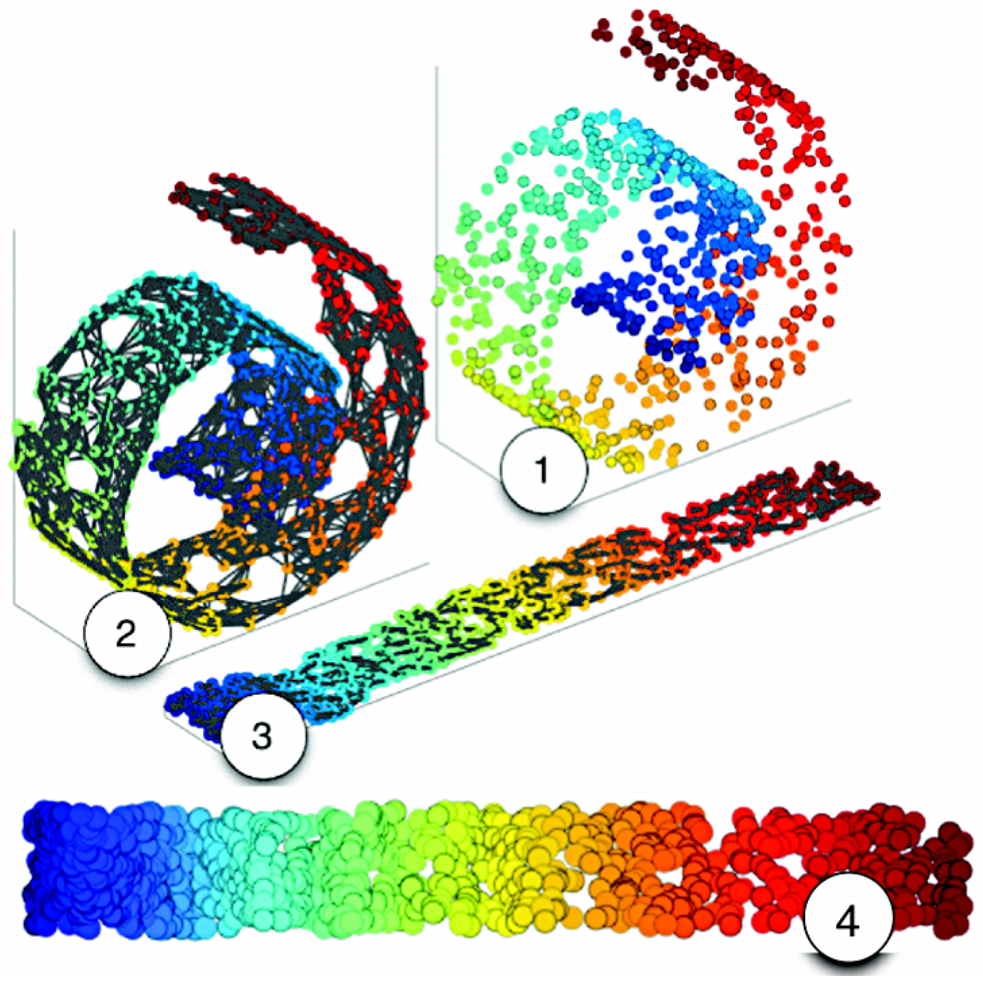}\label{fig:SwissRoll}}
\caption{(a) Sensors placed in a field for monitoring the region. (b) Tryptophan, one of the 20 standard amino acids \cite{prot}. (c) (1) A high dimensional ``swiss roll'' data set that spans a lower dimensional hypersurface \cite{dattorro2005convex}. (2) Data points within a local radius are connected. (3) The lower dimensional representation of the high dimensional  surface preserving the relative geometry of the points.}
\end{figure}

 Most of the existing literature on convex relaxations focus on the case where the pairwise proximity measurements are either known noiselessly or known with some slight perturbations (i.e. line-of-sight (LOS) localization). However in practice, a significant fraction of the data could have outliers. For example, in the case of sensor localization,  non-line-of-sight (NLOS) propagation introduces large errors in the distance measurements. Spin-diffusion phenomenon in NMR spectroscopy for protein molecular conformation is known to produce bias errors in the measurements \cite{berger1999reconstructing}. Data in higher dimensions do not all lie on the manifold and there could be outliers that lie far away from the surface of the manifold.\\
 
   Traditional methods that do not account for the large bias errors can result in significant localization errors. The hardness of the problem arises from the fact that the bias errors are typically large and it is not known a priori which measurements are biased. There has been some work in the literature that addresses this case using convex relaxations \cite{venkatesh2006linear}. However these relaxations assume that the locations of the errors are known which makes the problem considerably simpler to solve. Our focus here is to address the problem of localization in the presence of large unknown bias errors.  For simplicity, we will keep the sensor localization problem in mind as we progress through the rest of the paper, though all the techniques can be easily adapted to other applications. We propose a semi-definite programming (SDP) relaxation for the case of NLOS localization and provide conditions under which the algorithm retrieves the node locations. The performance of the algorithm is evaluated through simulations and also validated on a data set collected through a real world indoor experimental setup of 44 nodes that is publicly available \cite{span}. The proposed relaxation is shown to mitigate the effect of the NLOS errors to a large extent by obtaining accurate estimates of the node locations. 

\section{Problem Setup}
Consider a static random placement of $N$ nodes in a $k$-dimensional euclidean space. The locations of these nodes are unknown and these nodes would hence be referred to as {\em agents}. We also have $M$ special nodes whose locations are exactly known called as {\em anchors}. Nodes within a radius $r$ of each other obtain relative distance measurements that could be corrupted by noise. Based on these distance measurements, the problem is that of determining the locations of the agents. Each of these measurements is modeled as  either a LOS-dominated signal or a NLOS-dominated signal by choosing the observation noise to be drawn from a mixture of two distributions.\\

Let $x_i \in {\mathbb R}^k$ denote the location of the $i$th agent. Let $d_{ij} = ||x_i-x_j||$ denote the actual distance between the $i$th and $j$th node. $\tilde{d}_{ij}$ denotes the corresponding distance measurement. We will model the distance measurement as $ \tilde{d}_{ij}^2  =  d_{ij}^2 + b_{ij} + n_{ij}$,
where $n_{ij}$ is a zero mean additive noise (e.g. gaussian) and $b_{ij}$ is given by,
\bean
b_{ij}  & = &  0  \mbox{   if  LOS}  \\ 
            & \geq & 0 \mbox{   otherwise}.
\eean
 Under NLOS, $b_{ij}$ is usually taken to be a positive random variable to model the statistics of multipath noise. 
The noise model is motivated by the time-of-arrival measurement modality for distance measurements. Since the primary multipath signal travels a longer distance than the true distance, the time delay of arrival for the NLOS path is larger than the LOS path and hence the NLOS bias can be modeled as additive. Similarly, the gaussian noise component is principally due to thermal noise in the receiver and so we take that term to be additive as well. We will assume that $\alpha$ fraction of the measurements are NLOS. We discuss a semi-definite programming approach to tackle this problem in the sections to follow.

\section{SDP Relaxation} 
\label{sec:SDPrelax}
Given the problem statement, we can define  a network $(X,A,{\cal E}_X,{\cal E}_A,D)$, of $N$ agents and $M$ anchors, where $X =  [x_1^T; x_2^T; ... ;x_N^T] \in \mathbb{R}^{N \times k}$ is the matrix of agent locations, $A = [a_1^T;...;a_M^T] \in \mathbb{R}^{M \times k}$ is the matrix of anchor locations, ${\cal E}_X$ is the set of edges between agents,  ${\cal E}_A$ is the set of edges between agents and anchors and $D$ is the matrix of pairwise distances between nodes that belong to the edge set. An edge is present between two nodes in the network iff we have a distance measurement between them (e.g. if they are within a distance $r$ of each other). We formulate the problem of estimating the node locations as an optimization problem. For this purpose, we will treat the biases $b_{ij}$'s in the measurements, as parameters that need to be estimated. The maximum likelihood formulation for estimating the node locations as well as the biases gives us the following optimization problem,
 \begin{equation}
\begin{aligned}
&\underset{X,B}{\text{maximize}}
&  p\left(\{\tilde{d}_{ij}^2\} | \{x_i\}_{i = 1}^N, \{a_i\}_{i=1}^M, \{b_{ij}\} \right) \\
& \text{subject to}
& b_{ij} \geq 0, \; i,j \in {\mathcal E_X \cup \mathcal E_A}.
\end{aligned}
\end{equation}

The above optimization reduces to the following least squares minimization if we take $n_{ij}$'s to be i.i.d gaussian.

 \begin{equation}
\begin{aligned}
& \underset{X, B}{\text{minimize}}
& \sum_{i,j \in {\mathcal E}_X} \left( \underbrace{||x_i-x_j||_2^2}_{d_{ij}^2} - \tilde{d}_{ij}^2+b_{ij}\right)^2  + \\
& & \sum_{i,j \in {\mathcal E}_A} \left( \underbrace{||x_i-a_j||_2^2}_{d_{ij}^2} - \tilde{d}_{ij}^2+b_{ij}\right)^2 \\
& \text{subject to}
& b_{ij} \geq 0, \; i,j \in {\mathcal E_X \cup \mathcal E_A}
\end{aligned}
\label{eq:min}
\end{equation}
where $B = \{b_{ij}\}$ is the matrix of bias variables. The above minimization problem is highly non-convex and has multiple minima. We aim at obtaining a convex approximation to the above problem.\\

 \begin{figure}
\centering
\includegraphics[height = 2.5in]{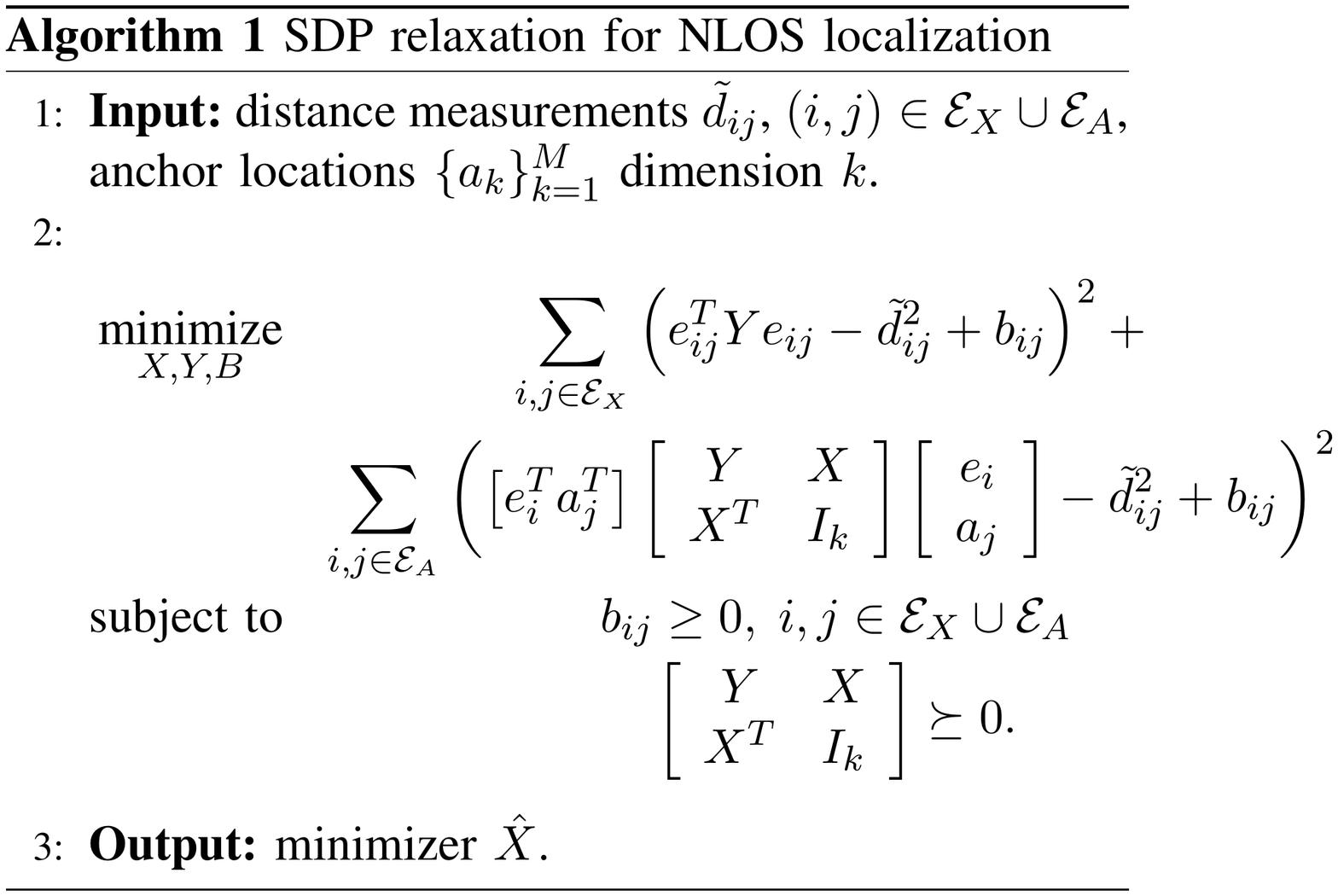}
\vspace{-0.5in}
\end{figure}

 The distance between the agents can be expressed as $||x_i-x_j||_2^2=e_{ij}^TXX^Te_{ij}$ where $e_{ij}$ is a vector in ${\mathbb{R}^N}$ with a $1$ and $-1$ at the $i$th and $j$th locations respectively and zeros elsewhere. Using the substitution $Y=XX^T$, we get $||x_i-x_j||_2^2=e_{ij}^TYe_{ij}$. The minimization problem (\ref{eq:min}) can thus be written as, 
\begin{equation}
 \begin{aligned}
\underset{X, Y, B}{\text{minimize}}
& \displaystyle \sum_{i,j \in {\mathcal E}_X} \left( e_{ij}^TYe_{ij} - \tilde{d}_{ij}^2+b_{ij}\right)^2  + \\
&  \displaystyle \sum_{i,j \in {\mathcal E}_A} \left( e_i^T Y e_i - 2 e_i^TXa_j + a_j^Ta_j - \tilde{d}_{ij}^2+b_{ij}\right)^2 \\
 \text{subject to}\\
& b_{ij} \geq 0, \; i,j \in {\mathcal E_X \cup \mathcal E_A} \\
& Y = XX^T,
\end{aligned}
 \label{eq:min2}
 \end{equation}
where $e_i$ is a vector with a one at the $i$th position and zeros elsewhere.\\

Since the constraint $Y=XX^T$ is not convex, we relax it to $Y\succeq XX^T$. Using Schur complements, we obtain the minimization problem as shown in Algorithm 1. The above semi-definite program can be efficiently solved using standard convex optimization techniques. One of the main advantages of this algorithm is that we do not need to input noise parameters to evaluate the optimization problem, which makes the algorithm useful in a practical setting.\\
 
 Let us develop an intuition for the relaxation. Without the relaxation, in problem (\ref{eq:min2}), we were searching for a feasible set of points in the $k$-dimensional space that minimize the cost function. The constraint that the points need to be in the $k$-dimensional space essentially makes the problem non-convex and hard to solve. The condition $Y\succeq XX^T$ relaxes the search for the points in any $k'$-dimensional space, such that $k' \leq N$. This can be seen as follows. If $k'$ is the rank of $Y$, then $Y\succeq XX^T$ implies that there exists $X' \in \mathbb{R}^{N \times (k'-k)}$ such that $Y = \left[X X' \right]\left[ {\begin{array}{cc}
 X^T \\
 X'^T \\
 \end{array} } \right] $.  $\tilde{X} = \left[X X' \right] \in \mathbb{R}^{N \times k'}$ can be thought of as the set of points in $k'$ dimensional space that are the minimizers of the optimization problem in Algorithm 1.  Figure \ref{fig:loc2d3d} illustrates this relaxation. The price that we pay for this simplification is that, once we get a solution in some $k'$ dimensional space, the projection onto the original $k$-dimensional space could yield bad results.The rest of the paper explores conditions under which this relaxation actually gives a solution in the $k$-dimensional space that would be close to the true node locations. 
 
  \begin{figure}
 \centering
 \includegraphics[height = 1.8in]{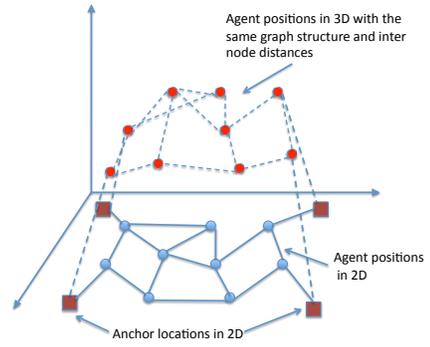}
 \caption{Illustration of the SDP relaxation. We search for points in a higher dimensional space satisfying the same distance measurements and the graph structure as in the lower dimensional space.}
 \label{fig:loc2d3d}
 \vspace*{-0.4in}
 \end{figure}
 
\subsection{Non-cooperative case}
 Consider the simple case of a single agent and multiple anchors, i.e. $N = 1$ and $M$ anchors. The minimization problem can be written as, 
 
   \begin{equation}
\begin{aligned}
& \underset{y, x, \{b_i\}}{\text{minimize}}
& & \sum_{i=1}^M \left(y-2a_i^Tx + a_i^Ta_i- \tilde{d}_{i}^2+b_{i}\right)^2 \\
& \text{subject to}
& & b_{i} \geq 0, \forall \ i\\
& & & y \geq x^Tx
\end{aligned}
\label{eqn:NonCoop}
\end{equation}
  where $x$ is the agent location to be estimated.
  
 The following theorem states our result,
 \begin{thm}
 For the case when the measurement noise is only due to bias errors i.e., $n_i = 0$, the minimization problem  (\ref{eqn:NonCoop}) gives an exact solution as long as the agent lies within the convex hull of two or more anchors with which it has LOS measurements.
\end{thm}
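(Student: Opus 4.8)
The plan is to exhibit the true configuration as a feasible point of zero cost and then show that it is the \emph{only} optimizer, with the relaxed constraint $y \ge x^Tx$ coming out tight. First I would substitute $x = x^*$ (the true agent location), $y = (x^*)^Tx^*$, and each $b_i$ equal to the actual bias $b_i^{\mathrm{true}}$ on measurement $i$ (which is $0$ on LOS edges). Because $n_i = 0$ forces $\tilde d_i^2 = \|x^* - a_i\|_2^2 + b_i^{\mathrm{true}}$, every summand $y - 2a_i^Tx + a_i^Ta_i - \tilde d_i^2 + b_i$ then vanishes, so the optimal value equals $0$; since the objective is a sum of squares, any optimal $(y, x, \{b_i\})$ must annihilate each summand individually.

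Next, for such an optimal point I would introduce the error variables $u = x - x^*$ and $\delta = y - (x^*)^Tx^*$. For every anchor $i$ with which the agent has a LOS measurement we have $b_i^{\mathrm{true}} = 0$, and vanishing of the $i$-th summand together with $b_i \ge 0$ collapses to the single scalar inequality $2a_i^Tu - \delta \ge 0$. In parallel, expanding the feasibility (Schur/PSD) constraint $y \ge x^Tx$ gives $\delta \ge 2(x^*)^Tu + \|u\|_2^2$.

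The crux is to kill the unknown offset $\delta$ using the convex-hull hypothesis. Writing $x^* = \sum_i \lambda_i a_i$ over the LOS anchors with $\lambda_i \ge 0$ and $\sum_i \lambda_i = 1$, the corresponding convex combination of the inequalities $2a_i^Tu - \delta \ge 0$ yields $2(x^*)^Tu - \delta \ge 0$; chaining this with $\delta \ge 2(x^*)^Tu + \|u\|_2^2$ forces $\|u\|_2^2 \le 0$, hence $u = 0$ and $x = x^*$. Feeding $u = 0$ back in then gives $\delta \ge 0$ from the PSD constraint and $\delta \le 0$ from any LOS equation, so $\delta = 0$, i.e. $y = x^Tx$ and the recovery is exact.

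The delicate points I expect are: (i) recognizing that it is precisely the defining properties of convex-combination coefficients --- nonnegativity and summing to one --- that annihilate the lone relaxation-gap variable $\delta$ appearing identically in every LOS equation, so it is genuine \emph{membership} in the convex hull (not merely the affine hull) that is used, and at least two LOS anchors are needed for the hypothesis to be non-vacuous; and (ii) the routine bookkeeping that for $N = 1$ every edge is an agent--anchor edge, so the objective has the single residual form in $(y, x, b_i)$ exploited above, with the NLOS anchors contributing only nonnegative slack that the optimizer can always zero out.
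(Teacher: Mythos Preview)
Your proof is correct and, at its core, uses the same leverage as the paper --- the convex-hull hypothesis on the LOS anchors --- but the execution differs. The paper argues by contradiction and geometry: assuming a distinct optimizer $\hat{x}$, it observes that $\|\hat{x}-a_i\|^2 \le d_i^2$ on the LOS anchors places all of them on one side of the perpendicular bisector of $[x,\hat{x}]$, yielding a hyperplane that separates $x$ from the convex hull of those anchors, a contradiction. You instead proceed directly and algebraically: you carry the relaxation-gap variable $\delta = y - (x^*)^Tx^*$ through the LOS residual equations, then take the convex combination indexed by the barycentric coefficients of $x^*$ to eliminate $\delta$ and force $\|u\|^2 \le 0$. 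The two arguments are Farkas-dual to one another (``no separating hyperplane'' versus ``is a convex combination''), so neither is more general; your version has the mild advantage of explicitly verifying that the relaxation is tight ($\delta = 0$, i.e.\ $y = x^Tx$), a point the paper's proof leaves implicit, and of avoiding the slightly informal inner-product/strict-inequality step in the paper's separating-hyperplane claim.
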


\begin{proof}
Let $x$ be the actual node location, and suppose it is contained within the convex hull of two or more anchor nodes with which it has LOS measurements. By contradiction, let $\hat{x}$ be a solution to equation (\ref{eqn:NonCoop}) such that $x \neq \hat{x}$. Let $W$ be the set of all $i$ such that $\tilde{d}_i^2$ is a LOS measurement and $x$ is contained in the convex hull of $\{a_i, i\in W\}$. Since the actual location of $x$ is a solution, the optimal objective value is upper (and lower) bounded by 0. Thus it must be true that $||\hat{x}-a_i||^2 \leq  (\tilde{d}_{i}^2 = d_i^2), \forall i \in W$. For this condition to hold, $\langle \hat{x}-x, a_i-\hat{x}\rangle > 0, \forall i \in W$ (consider 3 anchors with $x$ in the middle. To be a distinct solution, $\hat{x}$ must simultaneously be closer to all 3 anchors than $x_i$'s giving the inner product condition). But this is precisely the condition for the existence of a separating hyperplane between two convex sets, where the sets are the point $x$ and the convex hull of the LOS anchors. If there exists a separating hyperplane between these two sets, then $x$ is not in the convex hull of $\{a_i, i \in W\}$, which is a contradiction. So $x$ must be the unique solution.
\end{proof}

\subsection{Cooperative case}
\label{subsec:SDPcontrac}

In this section we consider the cooperative case where we have internode distances between the agents. Recall the optimization formulation in Algorithm 1. Note that for every term in the cost function we have a positive bias term $b_{ij}$, that we wish to estimate and use it to offset any NLOS bias in the corresponding distance measurement. The optimization formulation does not assume any prior knowledge on the bias magnitude or whether a particular link is LOS or NLOS.\\

 Consider a given network of anchors and agents ${\cal N}_1 = (X,A,{\cal E}_X,{\cal E}_A,D)$, with an underlying graph of connections between the nodes, where we have an edge between two nodes iff we are given a distance measurement between those two nodes. Suppose there exists another set of agent locations satisfying the same underlying graph topology as the original network ${\cal N}_1$ but with inter-node distances restricted to the edge set, shorter than the original network of nodes i.e., ${\cal N}_2 = (\tilde{X},A,{\cal E}_X,{\cal E}_A,\tilde{D})$, where $\tilde{D}_{ij} \leq D_{ij}, \forall (i,j) \in {\cal E}_X \cup {\cal E}_A$. Figure \ref{fig:ConNonCon}(a) is an example of such a network. In such a case, the optimization problem cannot distinguish whether the given set of distance measurements belonged to the original network, or whether the distance measurements were obtained from the second network corrupted by positively biased noise. Thus we can only hope to have a unique solution as long as the original network has a graph structure that rules out the existence of a different set of node locations satisfying the same graph structure and smaller inter-node distances. A fully connected graph is one such example. Figure \ref{fig:ConNonCon}(b) shows another example of such a network. Note that the node locations of the agents cannot be shifted without increasing at least one of the distances. Both these examples are in two-dimensions where we assumed that the nodes are also placed in two-dimensions (i.e. $k = 2$). However we saw that the SDP relaxation allows us to search in for node locations in a higher dimensional space. Hence we would need this property to hold in higher dimensions too. This motivates the definition of {\em non-contractible} networks.\\

\emph{Definition:} A network $(X,A,{\cal E}_X,{\cal E}_A,D)$ is said to be {\em non-contractible} if there exists a unique $X \in \mathbb{R}^{N \times k}$ satisfying the location constraints imposed by $D$ and there exists no $x'_j \in \mathbb{R}^h, \ \ j = 1,2,..,N$, $\forall h \geq k$ such that
\bean
||(a_{\ell};0) - x'_j||^2 & \leq & d_{\ell j}^2  \ \ \ \   \forall \ \ (\ell,j) \in {\cal E}_A\\
||x'_i - x'_j||^2 & \leq & d_{ij}^2  \ \ \ \  \forall \ \ (i,j) \in {\cal E}_X\\
x'_j & \neq & (x_j;0) \   \mbox{for some } j \in \{1, 2, 3, ....,N\}
\eean

\begin{figure}
\subfigure[Contractible network]{
\includegraphics[width=1.5in]{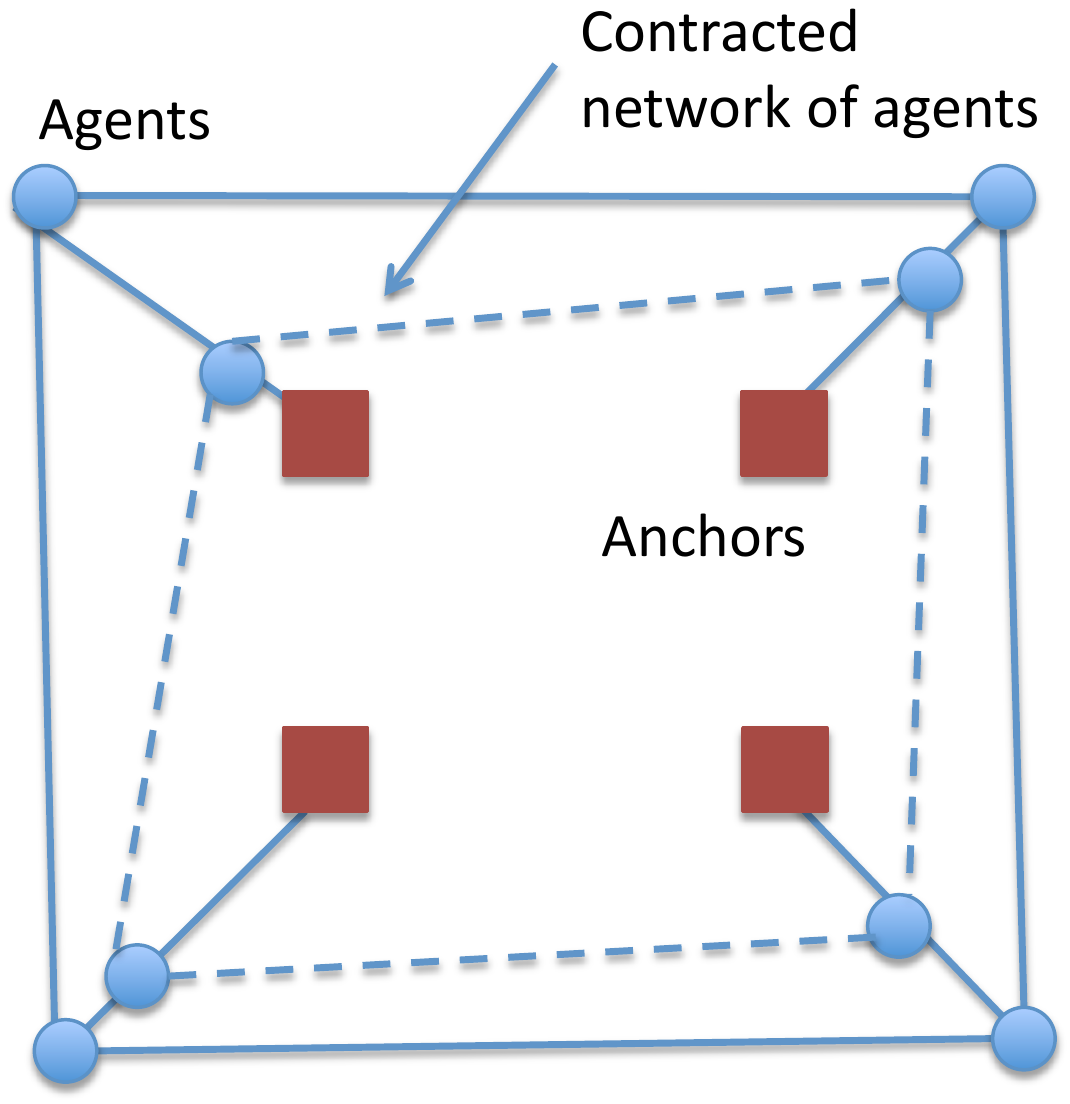}}
\subfigure[Non-contractible network]{
\includegraphics[width=1.55in]{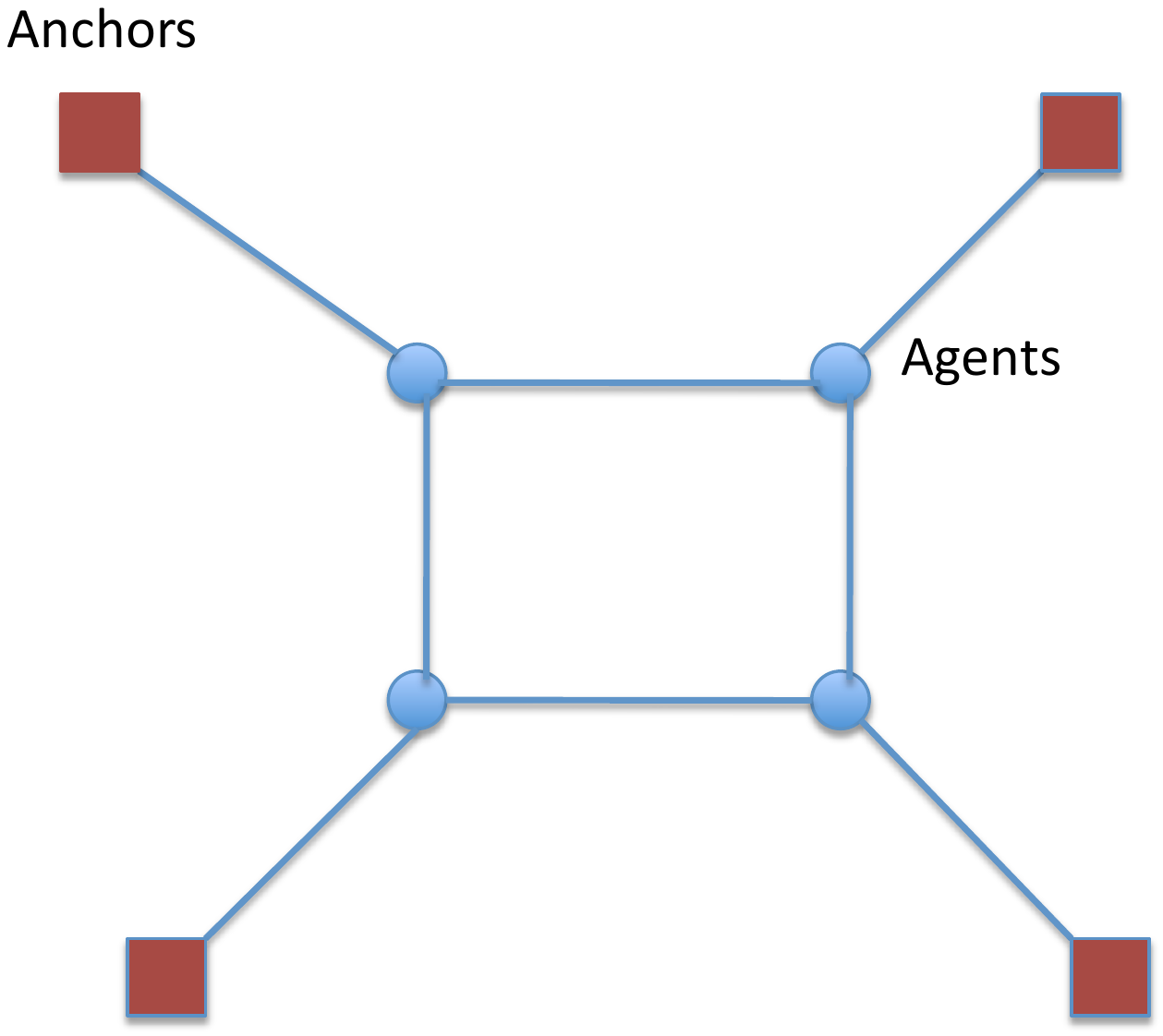}}
\caption{(a) Contractible networks. The dotted lines and corresponding nodes represent a contracted network with inter node distances smaller than that of the original network.(b) Non contractible network. See that the agent locations cannot be shifted without increasing at least one of the distances.
 }
\label{fig:ConNonCon}
\end{figure}

Figure \ref{fig:ConNonCon} shows examples of networks that are contractible and non-contractible in $\mathbb{R}^2$.  Note that even if one node lies outside the convex hull of the anchors, the network is contractible. However the converse does not hold. \\

The following theorem establishes that non-contractibility is a necessary and sufficient condition for the SDP relaxation to provide an exact and unique solution. 

\begin{thm}
\label{thm:SDPcontrac}
For the case when the measurement noise is only due to bias errors i.e. $n_{ij} = 0$,
\bit
\item Algorithm 1 gives the exact solution if the underlying LOS network of nodes is non-contractible.
\item If the max rank solution of the optimization problem in Algorithm 1 is $k$, then  the solution is exact.
\eit
\end{thm}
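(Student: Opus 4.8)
The plan rests on the observation that, because $n_{ij}=0$, the true configuration is itself feasible for Algorithm~1 at zero cost. Indeed $\tilde d_{ij}^2=d_{ij}^2+b_{ij}^{\mathrm{true}}$ with $b_{ij}^{\mathrm{true}}\ge0$, so $(X,\,XX^T,\,\{b_{ij}^{\mathrm{true}}\})$ is feasible and attains objective $0$; since the objective is nonnegative and vanishes exactly when every residual does, the optimal value is $0$ and every optimal triple $(\hat X,\hat Y,\hat B)$ reproduces the data exactly: $e_{ij}^T\hat Y e_{ij}=\tilde d_{ij}^2-\hat b_{ij}$ on $\mathcal E_X$, $\hat Y_{ii}-2\hat x_i^Ta_j+\|a_j\|^2=\tilde d_{ij}^2-\hat b_{ij}$ on $\mathcal E_A$, together with $\hat b_{ij}\ge0$ and, from the Schur-complement constraint, $\hat Y\succeq\hat X\hat X^T$. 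Both bullets are read off from this.

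For the first bullet I would lift an arbitrary optimal solution to a point set in possibly higher dimension. Writing $\hat Y-\hat X\hat X^T=X'X'^T$ (possible since $\hat Y\succeq\hat X\hat X^T$), set $\tilde X=[\hat X\ \ X']\in\mathbb R^{N\times k'}$ with $k'\ge k$, let $\tilde x_j=(\hat x_j;x'_j)$ be its rows, and place the anchors at $\tilde a_\ell=(a_\ell;0)$. A direct computation gives $e_{ij}^T\hat Y e_{ij}=\|\tilde x_i-\tilde x_j\|^2$ and $\hat Y_{ii}-2\hat x_i^Ta_j+\|a_j\|^2=\|\tilde x_i-\tilde a_j\|^2$, so, using $\hat b_{ij}\ge0$ and $\tilde d_{ij}=d_{ij}$ on LOS edges, the lifted set satisfies $\|\tilde x_i-\tilde x_j\|^2\le d_{ij}^2$ on LOS agent edges and $\|\tilde x_i-\tilde a_\ell\|^2\le d_{\ell j}^2$ on LOS anchor edges --- exactly the inequality system in the definition of non-contractibility applied to the LOS sub-network, with embedding dimension $h=k'\ge k$. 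Non-contractibility then forces $\tilde x_j=(x_j;0)$ for every $j$, hence $\hat X=X$, $X'=0$, $\hat Y=XX^T$, and $\hat b_{ij}=b_{ij}^{\mathrm{true}}$: Algorithm~1 returns the true locations. The only care required is the bookkeeping that the lifted coordinates absorb precisely the slack $\hat Y-\hat X\hat X^T$ and that the anchors sit at $(a_\ell;0)$ so that the anchor residuals become honest squared distances.

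For the second bullet I would use that the maximum-rank optimal solution has rank $k$. The positive-semidefinite matrix of Algorithm~1 carries the fixed $I_k$ block produced by the Schur complement, so its rank is always at least $k$ and equals $k$ exactly when $\hat Y=\hat X\hat X^T$ (zero slack); hence a max rank of $k$ forces $\hat Y=\hat X\hat X^T$ at every optimum. The agent--agent residuals then read $\|\hat x_i-\hat x_j\|^2=\tilde d_{ij}^2-\hat b_{ij}$, and together with the anchor residuals and $\hat b_{ij}\ge0$ this makes $(\hat X,\hat B)$ a zero-cost feasible point of the original non-convex problem~(\ref{eq:min}): the relaxation incurs no gap, which is the sense in which the solution is exact. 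I expect this to be the delicate step: one has to be careful about which rank is meant --- the rank of the lifted (Schur-complement) matrix rather than of $\hat Y$ alone --- so that ``rank $k$'' genuinely annihilates the $\hat X$--$\hat Y$ slack, and one should flag that strengthening ``exact'' to ``coincides with the true $X$'' still requires uniqueness of the $\mathbb R^k$ configuration, i.e. the non-contractibility hypothesis of the first bullet, since a contracted $\mathbb R^k$ configuration would also furnish a rank-$k$, zero-cost solution.
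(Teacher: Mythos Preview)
Your treatment of the first bullet is essentially the paper's own argument, stated more cleanly: factor the slack $\hat Y-\hat X\hat X^T=X'X'^T$, lift to $\tilde x_j=(\hat x_j;x'_j)$ with anchors at $(a_\ell;0)$, and read off the contractibility inequalities on the LOS edges, contradicting non-contractibility unless $\hat X=X$ and $X'=0$.

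For the second bullet your route diverges from the paper's and stops short. You correctly observe that max rank $k$ forces $\hat Y=\hat X\hat X^T$ at every optimum and hence zero relaxation gap, but you then hedge that upgrading ``exact'' to ``coincides with the true $X$'' would still require non-contractibility, since a contracted $\mathbb R^k$ configuration would give another rank-$k$ zero-cost solution. The paper closes this gap \emph{without} invoking non-contractibility, via a convexity argument you are missing: if $(X_1,Y_1,B_1)$ and $(X_2,Y_2,B_2)$ are both optimal, so is every convex combination, and applying the max-rank-$k$ hypothesis to the combination forces
\[
\beta Y_1+(1-\beta)Y_2=\bigl(\beta X_1+(1-\beta)X_2\bigr)\bigl(\beta X_1+(1-\beta)X_2\bigr)^T.
\]
Expanding with $Y_i=X_iX_i^T$ yields $(X_1-X_2)(X_1-X_2)^T=0$, so $X_1=X_2$. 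Thus max rank $k$ alone already forces uniqueness of the optimal $(X,Y,B)$; since the true configuration is one such optimum, it is the only one. In particular, the contracted-$\mathbb R^k$ scenario you flag cannot coexist with the max-rank-$k$ hypothesis --- that is precisely what the convexity lemma rules out --- so no separate appeal to non-contractibility is needed for this bullet.
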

\begin{proof}
The intuition for the proof relies on the arguments made for the definition of non-contractibility. The sufficiency of the non-contractibility condition is proved using arguments of contradiction. See the appendix for a formal proof which is on the lines of \cite{biswas2004semidefinite}.\\
\end{proof}

{\em Remark:} The notion of non-contractibility is a generalization of the uniquely localizable condition \cite{biswas2004semidefinite} for NLOS localization. However, the question of what networks are non-contractible in practice is still an open question. There has been recent progress in the LOS case in characterizing the classes of networks that are uniquely localizable using graph rigidity theory \cite{zhu2010universal}. Analogously, one could hope to utilize results from tensegrity theory \cite{connelly2009tensegrities}, to characterize networks that are globally stable and non-contractible. This is a future direction of research.

\section{Validation}
\label{subsec:SDPvalidation}
\subsection{Simulation}
\begin{figure}
\centering
\includegraphics[width=2.5in]{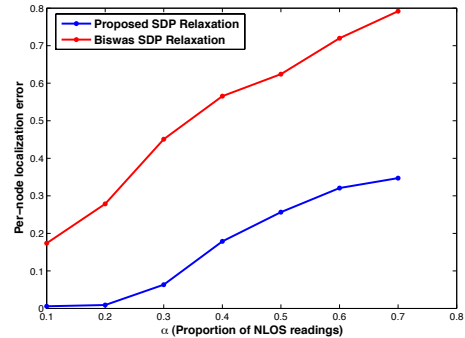}
\caption{Error as a function of the fraction of NLOS measurements under the different formulations. The NLOS bias was uniform on $[0,6]$, the noise was zero-mean Gaussian with $\sigma_{LOS} = 0.02$, and error values are averaged over 10 trials each.}
\label{fig:ErrorNLOS}
\end{figure}

\begin{figure}
\centering
\includegraphics[width=2.5in]{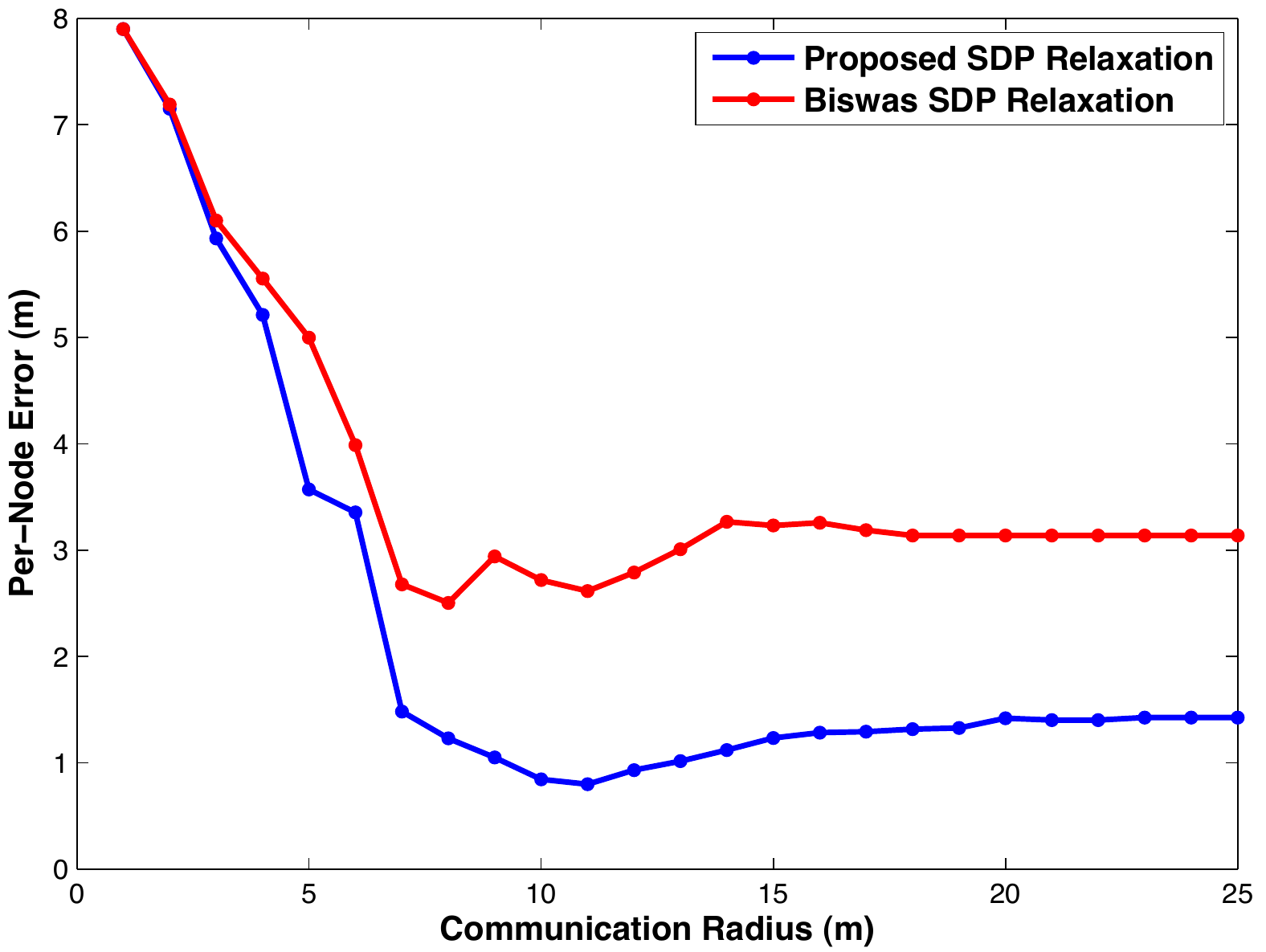}
\caption{Per-node error as a function of communication radius on the data set obtained by Patwari et al \cite{patwari2005locating}.}
\label{fig:patwari_error}
\end{figure}

Simulations were carried out for the proposed SDP relaxation and compared against the scheme proposed by Biswas et. al. \cite{biswas2004semidefinite}. We chose this baseline system to compare in order to highlight the gains obtained by explicitly accounting for NLOS in the measurements. The SDP relaxation in \cite{biswas2004semidefinite} is robust to small percentages of NLOS noise but fails for larger percentages given that they do not explicitly account for NLOS. The setup consisted of 50 nodes randomly placed in a $[-1,1]\times [-1,1]$ grid with 15 anchors. Anchors were placed uniformly along the upper and lower boundaries of the grid. The radius of connectivity was taken as $r = 1.5$, and the gaussian noise standard deviation was set to 0.02. The NLOS noise was taken to be uniform over the interval $[0,6]$. Results were obtained after averaging over 10 trials. \\

Figure \ref{fig:ErrorNLOS} shows the variation of the error in the node location as a function of the fraction of NLOS measurements for the relaxation. The error metric is the per node average mean squared error in the node location estimate i.e. $||x-\hat{x}||/\sqrt{N}$.  It can be seen that the proposed SDP relaxation is robust to a large fraction of NLOS errors.\\

\subsection{Experimental Results}
\label{subsubsec:SDPexpt}

\begin{figure}
\centering
\includegraphics[width=2.5in]{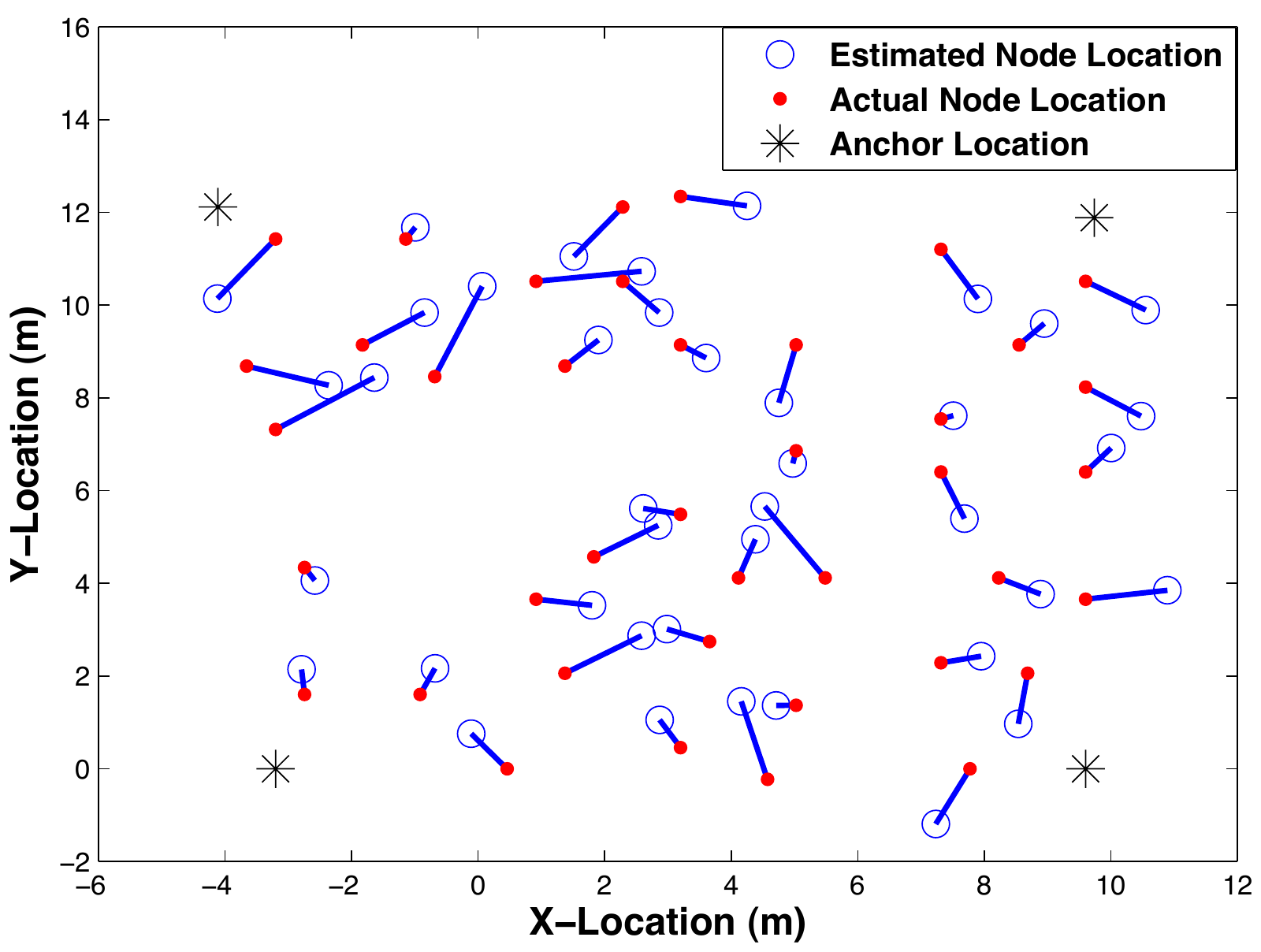}
\caption{Node location estimates for the experimental data set using our SDP relaxation.}
\label{fig: sdp_loc_patwari}
\end{figure}

\begin{figure}
\centering
\includegraphics[width=2.5in]{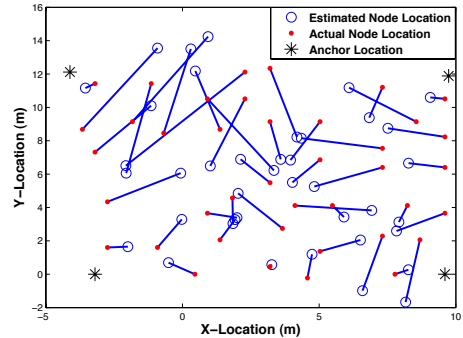}
\caption{Node location estimates for the experimental data set using the scheme by Biswas et al. \cite{biswas2004semidefinite}.}
\label{fig: biswas_loc_patwari}
\end{figure}

The SDP relaxation was tested against data obtained from real world experiments. We use the data set published by Patwari et. al. \cite{patwari2005locating}. The experiments were conducted in an indoor office environment by simulating 44 node locations using a transmitter and a receiver and obtaining pairwise time-of-arrival measurements. The environment had a lot of scatterers and nearly all the distance estimates have strong NLOS biases in them. The authors in \cite{patwari2005locating}, artificially subtracted out the NLOS biases and obtained the node location estimates using LOS algorithms. They report a per node localization error of $1.26m$ after utilizing all the measurements i.e. a fully connected network. We use the same data set without subtracting out the NLOS biases and show that the SDP relaxation performs very well even in this setup. The node location estimates are obtained using only a subset of measurements, based on an arbitrarily defined radius of communication. The SDP relaxation does not need any inputs regarding the noise parameters which is required by most of the existing algorithms in the literature. \\

Figure \ref{fig:patwari_error} shows the per node error as a function of the communication radius. The per node error is seen to be less than a meter even when the NLOS biases are included and only a subset of the measurements are used. However the performance declines as the radius becomes too large or too small. The reasoning is that for small radii, each node may not have enough information to fully resolve its position and the network could become non-contractible. For large radii, the multipath bias is more likely to be large, causing errors which can be reasoned as follows. The measurement model that we considered has the bias added to the square of the distances. However in practice the bias would be added to the actual distance value. Hence when we square the distances and apply our measurement model, the bias values would be a function of the distance and hence would increase as the distances increase. This partially explains the reason why the performance degrades slightly as we increase the radius of connectivity. Figure \ref{fig: sdp_loc_patwari} and Figure \ref{fig: biswas_loc_patwari} show the estimated and true node locations for the experimental data set obtained from the SDP relaxations.

\section{Conclusion}
In this paper, we considered the problem of estimating the locations of a set of points in a $k$-dimensional euclidean space given pairwise distance measurements  amongst the points. We considered the case when some fraction of the measurements could be arbitrarily corrupted, proposing a SDP relaxation formulation for the problem. The relaxation was shown to give the exact solution when the underlying graph is non-contractible. Simulations and testing on real-world data shows that the algorithm is quite robust to NLOS noise. Future work involves strengthening the theoretical results in  order to provide meaningful guarantees in practice. It would be interesting to characterize the classes of graphs that are non-contractible. We also need to theoretically prove that the SDP relaxation is robust to thermal noise.

\bibliographystyle{IEEEtran}
\bibliography{NLOSConvexRelax}

  \appendices
   \section{Proof of Theorem \ref{thm:SDPcontrac}}
   \label{app:SDPrelax}
\begin{lemma} 
If the max rank solution of  the optimization problem in Algorithm 1 is $k$, then it is unique.
\end{lemma}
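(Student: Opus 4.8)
The plan is to recognize this as the standard ``maximum-rank optimal solution'' phenomenon for semidefinite programs and then exploit the identity block forced by the Schur-complement reformulation. First I would make the actual matrix variable explicit: after the Schur complement, Algorithm~1 optimizes over
\[
Z = \begin{pmatrix} I_k & X^{T} \\ X & Y \end{pmatrix} \succeq 0,
\]
subject to the affine constraint that the leading $k \times k$ block of $Z$ equals $I_k$, together with the bias constraints $b_{ij} \ge 0$; the objective is a sum of squares of affine functions of $(X,Y,B)$ and hence convex, so the set of optimal triples is convex. Two elementary facts about such a $Z$: (i) $\operatorname{rank}(Z) \ge k$ always, since its first $k$ columns already span a $k$-dimensional space; and (ii) by the Schur identity $\operatorname{rank}(Z) = k + \operatorname{rank}(Y - XX^{T})$, so $\operatorname{rank}(Z) = k$ precisely when $Y = XX^{T}$. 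Thus the hypothesis says that the optimal solution of largest rank has its $Y$-block equal to $XX^{T}$.

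Next I would recall, and reprove for completeness, that a convex set of positive semidefinite matrices has a distinguished element of maximal rank whose range contains the range of every member. Indeed, if $Z_1,Z_2$ are optimal then so is $\tfrac12(Z_1+Z_2)$, and for positive semidefinite matrices $\operatorname{range}(Z_1+Z_2) = \operatorname{range}(Z_1)+\operatorname{range}(Z_2)$ because $\ker(Z_1+Z_2) = \ker Z_1 \cap \ker Z_2$. By finite dimensionality some optimal $Z^{\star}$ has maximal range, and maximality forces $\operatorname{range}(Z) \subseteq \operatorname{range}(Z^{\star})$ for every optimal $Z$; this $Z^{\star}$ is the max-rank solution.

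Now suppose $\operatorname{rank}(Z^{\star}) = k$ and let $Z'$ be any optimal solution. Then $\operatorname{rank}(Z') \ge k$ while $\operatorname{range}(Z') \subseteq \operatorname{range}(Z^{\star})$, a space of dimension $k$, so the ranges coincide; write $V \in \mathbb{R}^{(N+k)\times k}$ for a basis of this common range, partitioned as $V = \begin{pmatrix} V_1 \\ V_2 \end{pmatrix}$ with $V_1 \in \mathbb{R}^{k\times k}$, and write $Z^{\star} = V \Lambda^{\star} V^{T}$, $Z' = V \Lambda' V^{T}$ with $\Lambda^{\star},\Lambda' \succ 0$. The leading-block constraint gives $V_1 \Lambda^{\star} V_1^{T} = I_k = V_1 \Lambda' V_1^{T}$; since $I_k$ has rank $k$ and $\Lambda^{\star} \succ 0$, the block $V_1$ is invertible, so $\Lambda^{\star} = V_1^{-1} V_1^{-T} = \Lambda'$ and therefore $Z' = Z^{\star}$. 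This gives uniqueness of $Z$, hence of $X$; uniqueness of the bias matrix $B$ then follows because, with $(X,Y)$ fixed, the objective is a strictly convex function of $B$ on the convex set $\{b_{ij} \ge 0\}$.

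The step I expect to be the crux is the second paragraph: pinning down exactly what ``the max-rank solution'' means and justifying that it dominates the ranges of all optimal solutions (equivalently, that an interior-point solver returns a point in the relative interior of the optimal face). The reformulation in the first paragraph and the concluding linear-algebra argument in the third are then routine, the only slightly delicate point being that the fixed $I_k$ block leaves no freedom in the positive-definite scaling once the range is fixed.
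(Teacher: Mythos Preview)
Your proof is correct, but the paper takes a shorter, more computational route. There one simply picks any two optimal triples $(X_1,Y_1,B_1)$ and $(X_2,Y_2,B_2)$, forms the convex combination $\beta(X_1,Y_1,B_1)+(1-\beta)(X_2,Y_2,B_2)$, and observes that it is again optimal and hence of rank exactly $k$ (every feasible $Z$ has rank $\ge k$, and the maximum over the optimal set is $k$ by hypothesis). Rank $k$ at the endpoints gives $Y_i = X_iX_i^{T}$; rank $k$ at the combination gives $\beta Y_1+(1-\beta)Y_2 = (\beta X_1+(1-\beta)X_2)(\beta X_1+(1-\beta)X_2)^{T}$, and expanding the two sides yields $\|X_1-X_2\|^2 = 0$. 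Your argument replaces this bare-hands expansion with the general SDP fact that the optimal face has a unique maximal-range element dominating all others, and then uses the fixed $I_k$ block to kill the residual gauge freedom in the factorization $Z=V\Lambda V^{T}$. The paper's approach buys brevity and avoids any auxiliary lemma; yours buys a clearer structural explanation (it is precisely the prescribed full-rank $I_k$ block that rigidifies the face) and generalizes immediately to any SDP with such a block. You also treat uniqueness of $B$ explicitly via strict convexity of the objective in $B$ once $(X,Y)$ is fixed, a point the paper's proof does not address.
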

\begin{proof}
We will prove this by contradiction. Suppose $(X_1,Y_1, B_1)$ and $(X_2,Y_2, B_2)$ are two solutions, then for any $\beta \leq 1$ one can easily check that $\beta(X_1,Y_1, B_1) + (1-\beta)(X_2,Y_2, B_2)$ is also a solution. Since the max rank is $k$, we should have that, 
\[\left[
\begin{array}{rl} \beta Y_1 + (1-\beta) Y_2 & \beta X_1^T + (1-\beta) X_2^T\\
                             \beta X_1 + (1-\beta) X_2  & I_k
 \end{array} \right]\] be rank $k$.  This gives us 
 \bean
  \beta Y_1 + (1-\beta) Y_2  & = & ( \beta X_1 + (1-\beta) X_2)^T (\beta X_1+ (1-\beta) X_2).
 \eean
 where $Y_1 = X_1^TX_1$ and $Y_2 = X_2^TX_2$. Rearranging terms, we get $||X_1-X_2||^2 = 0$ which implies $X_1 = X_2$.
 \end{proof} 
We will now show that if the underlying LOS network is non-contractible then the solution is unique. Assume that each node has at least one LOS measurement. We will prove uniqueness by contradiction. Suppose that the network is non-contractible and the solution is not exact. Then by the definition of non-contractibility and by the non-uniqueness of solution $X$, there exists a solution $X'$ of rank higher than $k$ that satisfies $Y \succeq X'^TX'$. $X'$ satisfies
 \[Y - X^TX = X'^{T}X'.\]
 Define
 \bean
  \tilde{X} & = & \left[
\begin{array}{rl} X\\
                            X'
 \end{array} \right].
 \eean
 We then have 
 \bean
 \tilde{X}^T\tilde{X} & = & X^TX + X'^TX'\\
 &  = & Y. 
 \eean
 Let ${\cal E}_{LOS}$ be the edge set of the LOS links. 
 \bean
 e_{ij}^T Y e_{ij} - \hat{d}_{ij}^2 + b_{ij} & = & 0, \\
  e_{ij}^T Y e_{ij} & \leq & d_{ij}^2, \ \ \ \forall i,j \in  {\cal E}_{LOS} \cap {\cal E}_X, \\
  ||\tilde{x}_i - \tilde{x}_j||^2 & \leq  & d_{ij}^2\ \ \  \forall i,j \in  {\cal E}_{LOS} \cap {\cal E}_X.
 \eean
 We also have, 
 \bean
 [e_i^T -a_{\ell}^T] \left[ {\begin{array}{cc}
 Y & X  \\
 X^T & I_k  \\
 \end{array} } \right] \left[
\begin{array}{rl} e_i\\
                            -a_{\ell}
 \end{array} \right] & \leq & d_{i\ell}^2 \ \ \  \\
 e_i^TYe_i - 2 a_{\ell}^T X e_i + a_{\ell}^Ta_{\ell} & \leq & d_{i\ell}^2,\\
   e_i^T \tilde{X}^T\tilde{X} e_i - 2 [a_{\ell}^T 0] 
 \left[
\begin{array}{rl} X\\
                            X'
 \end{array} \right] e_i + ||a_{\ell}||^2 & \leq & d_{i\ell}^2, \\
 ||(a_{\ell};0)- \tilde{x}_i||^2 & \leq& d_{i\ell}^2 \ \ \   \\
\mbox{for } \forall i,\ell & \in & {\cal E}_{LOS} \cap {\cal E}_A
 \eean
 Thus, we just showed that the underlying network is contractible which is a contradiction to our initial assumption that the network is non-contractible.

\end{document}